\providecommand{\U}[1]{\protect\rule{.1in}{.1in}}
\newtheorem{theorem}{Theorem}
\newtheorem{lemma}[theorem]{Lemma}
\newenvironment{proof}[1][Proof]{\noindent\textbf{#1.} }{\ \rule{0.5em}{0.5em}}
\begin{document}

\title{An example of a quantum statistical model which cannot be mapped to a less
informative one by any trace preserving positive map}
\author{Keiji Matsumoto\\National Institute of Informatics, \\2-1-2 Hitotsubashi, Chiyoda-ku, Tokyo 101-8430 \\e-mail:keiji@nii.ac.jp}
\maketitle

Comparison of statistical models (experiments) is an important branch of
mathematical statistics, which gives deep insights in many aspects of
foudation of statistics (see \cite{ShiryaevSpokoiny}\cite{Torgersen} and
references therein, for example). Given a parameter space $\Theta$ (for
simplicity, here we suppose $\Theta$ is a finte set), consider two families of
probability distirbutions $\left\{  p_{\theta}\right\}  _{\theta\in\Theta}$
and $\left\{  q_{\theta}\right\}  _{\theta\in\Theta}$ . \ Let $\mathcal{D}$ be
a space of the decision that a statistician may take, and for each
$d\in\mathcal{D}$, let $l_{\theta}\left(  d\right)  $ be the loss when
decision $d$ is chosen and true parameter value is $\theta$. We say the former
is more informative than the latter, or
\[
\left\{  p_{\theta}\right\}  _{\theta\in\Theta}\geq\left\{  q_{\theta
}\right\}  _{\theta\in\Theta}\,,
\]
if and only if the following is true for all $\mathcal{D}$ and $\left\{
l_{\theta}\right\}  _{\theta\in\Theta}$ : for any the map $d$ from the data
space to the decision space, there is $d^{\prime}$ such that
\[
\sum_{x}l_{\theta}\left(  d^{\prime}\left(  x\right)  \right)  p_{\theta
}\left(  x\right)  \leq\sum_{x}l_{\theta}\left(  d\left(  x\right)  \right)
q_{\theta}\left(  x\right)  ,\,\forall\theta\in\Theta.
\]
Celebrated Brackwell theorem, or randoization criteria of LeCam
\cite{ShiryaevSpokoiny}\cite{Torgersen} tells that this is equivalent to the
existence of the transition probability $P\left(  x|x^{\prime}\right)  $
\[
p_{\theta}\left(  x\right)  =\sum_{x^{\prime}}P\left(  x|x^{\prime}\right)
q_{\theta}\left(  x^{\prime}\right)  ,\,\forall\theta\in\Theta.
\]
(For notational simplicity, here and below the set of all data ($x$'s) and the
set where the descison takes value has finite number of elements. However,
with proper mathematical settings, the above results essentially holds even if
the former and the latter is an arbitrary measurable space and an arbitrary
topological space, respectively.)  

So far, there are two versions of quantum extensions of this concept
\cite{Buscemi:10}\cite{Jencova:12}\cite{Jencova:14}\cite{Matsumoto:10}. One is
to consider classical decision problem on quantum state families.
\[
\left\{  \rho_{\theta}\right\}  _{\theta\in\Theta}\geq^{c}\left\{
\sigma_{\theta}\right\}  _{\theta\in\Theta}%
\]
if and only if the following is true for all $\mathcal{D}$ and $\left\{
l_{\theta}\right\}  _{\theta\in\Theta}$ : for any measurement $\left\{
M_{d}\right\}  _{d\in\mathcal{D}}$ taking values in $\mathcal{D}$, there is a
mesurement $\left\{  M_{d}^{\prime}\right\}  _{d\in\mathcal{D}}$ such that
\[
\sum_{d\in\mathcal{D}}l_{\theta}\left(  d\right)  \mathrm{tr}\,M_{d}^{\prime
}\,\rho_{\theta}\leq\sum_{d\in\mathcal{D}}l_{\theta}\left(  d\right)
\mathrm{tr}\,M_{d}\,\sigma_{\theta},,\,\forall\theta\in\Theta.
\]

Another version is to consider the full quantum task. Consider a Hilbert space
$\mathcal{H}_{D}$ as an quantum analogue of decision space, and an operator
$L_{\theta}$ in $\mathcal{H}_{D}$ defining the loss. A decision rule is a
completely positive trace preserving (CPTP) map $\Lambda$ to operators on
$\mathcal{H}_{D}$ . Then we write
\[
\left\{  \rho_{\theta}\right\}  _{\theta\in\Theta}\geq^{q}\left\{
\sigma_{\theta}\right\}  _{\theta\in\Theta}\,
\]
if and only if the following is true for all $\mathcal{H}_{D}$ and $\left\{
L_{\theta}\right\}  _{\theta\in\Theta}$: for any CPTP map $\Lambda$ there is a
CPTP map $\Lambda^{\prime}$ such that
\[
\mathrm{tr}\,L_{\theta}\Lambda^{\prime}\left(  \rho_{\theta}\right)
\leq\mathrm{tr}\,L_{\theta}\Lambda\left(  \sigma_{\theta}\right)
,\,\forall\theta\in\Theta.
\]
(Here, the loss  measure is linear in the state. But, use of  bounded and
continuous functionals of the state over $\mathcal{H}_{D}$ does not change the
definition of $\geq^{q}$ at all \cite{Matsumoto:10}.)   

"$\geq^{q}$" holds if and only if there is a CPTP map $\Gamma$ such
that\thinspace\cite{Jencova:14}\cite{Matsumoto:10}
\begin{equation}
\Gamma\left(  \rho_{\theta}\right)  =\sigma_{\theta},\,\theta\in
\Theta.\label{transform}%
\end{equation}
Meantime, if there is positive trace preserving $\Gamma$ with above relation
exists, then "$\geq^{c}$" holds obviously. A natural question is whether this
is necessary. In this paper, we answer the question negatively by giving a
couter example.

Let $\Theta=\left\{  0,1\right\}  $, and consider the following condition,
\begin{equation}
\left\Vert \rho_{0}-t\rho_{1}\right\Vert _{1}\geq\left\Vert \sigma
_{0}-t\,\sigma_{1}\right\Vert _{1},\,\forall t\geq0,\label{r-tr-e}%
\end{equation}
where $\left\Vert A\right\Vert _{1}:=\mathrm{tr}\,\sqrt{A^{\dagger}A}$. If
$g\left(  x\right)  $ is a real valued function, $\left\Vert g\right\Vert
_{1}:=\sum_{x}\left\vert g\left(  x\right)  \right\vert $.

We use the following lemma. This is not new \cite{Jencova:12}%
\cite{Matsumoto:10}, but the proof is stated for completeness.  

\begin{lemma}
\cite{Jencova:12}\cite{Matsumoto:10} Suppose $\left[  \rho_{0},\rho
_{1}\right]  =0$. Then, $\left\{  \rho_{\theta}\right\}  _{\theta\in\Theta
}\mathcal{\geq}^{c}\left\{  \sigma_{\theta}\right\}  _{\theta\in\Theta}$ if
and only if (\ref{r-tr-e}) holds.
\end{lemma}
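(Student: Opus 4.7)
The plan is to prove the two directions of the equivalence separately. Only $(\ref{r-tr-e})\Rightarrow{\geq^c}$ will use $[\rho_0,\rho_1]=0$; the converse implication actually holds for arbitrary states.

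For the forward direction, fix $t\geq 0$, take $\mathcal{D}=\{0,1\}$ with losses $l_0(1)=1$, $l_1(0)=t$, and the remaining two entries zero, and pick an effect $E^{\ast}$ that minimizes $\mathrm{tr}\,E\sigma_0+t\,\mathrm{tr}\,(I-E)\sigma_1$ over $0\leq E\leq I$; the minimum equals $\tfrac12(1+t-\left\Vert\sigma_0-t\sigma_1\right\Vert_1)$. Feed the measurement $M=\{I-E^{\ast},E^{\ast}\}$ on $\sigma$ into the definition of $\geq^c$ to obtain an $M'=\{M'_0,M'_1\}$ on $\rho$ with $\mathrm{tr}\,M'_1\rho_0\leq\mathrm{tr}\,E^{\ast}\sigma_0$ and $t\,\mathrm{tr}\,M'_0\rho_1\leq t\,\mathrm{tr}\,(I-E^{\ast})\sigma_1$. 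Summing these two inequalities and observing that the left-hand side is at least $\tfrac12(1+t-\left\Vert\rho_0-t\rho_1\right\Vert_1)$ (the analogous minimum over $\rho$-effects) produces (\ref{r-tr-e}).

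For the converse, simultaneously diagonalize the commuting pair to write $\rho_{\theta}=\sum_i p_{\theta}(i)|i\rangle\langle i|$, so that $\left\Vert\rho_0-t\rho_1\right\Vert_1=\left\Vert p_0-tp_1\right\Vert_1$. Given any POVM $\{M_d\}_{d\in\mathcal{D}}$ on $\sigma$ set $q_{\theta}(d):=\mathrm{tr}\,M_d\sigma_{\theta}$; splitting $\sigma_0-t\sigma_1$ into positive and negative spectral parts gives the pointwise estimate $|\mathrm{tr}\,M_d(\sigma_0-t\sigma_1)|\leq\mathrm{tr}\,M_d|\sigma_0-t\sigma_1|$, and summing over $d$ yields
\[
\left\Vert q_0-tq_1\right\Vert_1\leq\left\Vert\sigma_0-t\sigma_1\right\Vert_1\leq\left\Vert p_0-tp_1\right\Vert_1,\qquad\forall t\geq 0,
\]
the last inequality by hypothesis. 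Now apply the classical two-point Blackwell/LeCam randomization criterion: this family of inequalities is exactly equivalent to the existence of a stochastic kernel $P(d|i)$ with $q_{\theta}(d)=\sum_i P(d|i)p_{\theta}(i)$. Lifting $P$ to the POVM $M'_d:=\sum_i P(d|i)|i\rangle\langle i|$ on $\rho$ gives $\mathrm{tr}\,M'_d\rho_{\theta}=q_{\theta}(d)=\mathrm{tr}\,M_d\sigma_{\theta}$, so $M'$ exactly reproduces the outcome distribution of $M$ and hence gives identical $\theta$-wise losses for every loss function, which is stronger than what $\geq^c$ requires.

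The only non-trivial ingredient is the classical two-point randomization criterion invoked in the converse; it is standard in classical comparison of experiments (e.g.\ \cite{Torgersen}) and can be cited rather than reproved here.
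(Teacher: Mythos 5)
Your proof is correct and rests on the same two pillars as the paper's: Torgersen's $t$-criterion for classical dichotomies and the identity $\max_M\left\Vert q_0^M-tq_1^M\right\Vert_1=\left\Vert \sigma_0-t\sigma_1\right\Vert_1$, whose two halves you establish separately (the Bayes-risk formula $\min_{0\le E\le I}\left(\mathrm{tr}\,E\sigma_0+t\,\mathrm{tr}(I-E)\sigma_1\right)=\tfrac12(1+t-\left\Vert\sigma_0-t\sigma_1\right\Vert_1)$ for the forward direction, and the pointwise bound $\left\vert\mathrm{tr}\,M_d A\right\vert\le\mathrm{tr}\,M_d\left\vert A\right\vert$ for the converse). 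The only substantive refinements over the paper's terser argument are that you lift the Blackwell kernel explicitly to the diagonal POVM $M'_d=\sum_i P(d|i)|i\rangle\langle i|$ and that you note the forward implication requires no commutativity — both sound, neither changing the route.
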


\begin{proof}
Let $q_{\theta}^{M}\left(  x\right)  :=\mathrm{tr}\,\sigma_{\theta}M_{x}$,
where $M=\left\{  M_{x}\right\}  $ is a POVM, $M_{x}\geq0$, $\sum_{x}%
M_{x}=\mathbf{1}$. Then, by definition, $\left\{  \rho_{\theta}\right\}
_{\theta\in\Theta}\mathcal{\geq}^{c}\left\{  \sigma_{\theta}\right\}
_{\theta\in\Theta}$ \ if and only if
\[
\,\,\left\{  \rho_{\theta}\right\}  _{\theta\in\Theta}\geq\left\{  q_{\theta
}^{M}\right\}  _{\theta\in\Theta},\forall M.\,
\]
This is equivalent to \cite{Torgersen-finite}
\[
\,\left\Vert \rho_{0}-t\rho_{1}\right\Vert _{1}\geq\left\Vert q_{0}^{M}%
-tq_{1}^{M}\right\Vert _{1},\,\forall M\,,\,\forall t\geq0.\,
\]
Therefore, since
\[
\,\max_{M}\left\Vert q_{0}^{M}-tq_{1}^{M}\right\Vert _{1}=\left\Vert
\sigma_{0}-t\sigma_{1}\right\Vert _{1},
\]
we have the assertion.
\end{proof}

Below, we give an example such that $\left\{  \rho_{\theta}\right\}
_{\theta\in\Theta}\mathcal{\geq}^{c}\left\{  \sigma_{\theta}\right\}
_{\theta\in\Theta}$ but there is no trace preserving positive map $\Gamma$
with (\ref{transform}). The example is given as follows. Let%

\begin{align*}
\rho_{0} &  =\left[
\begin{array}
[c]{ccc}%
\alpha &  & \\
& 0 & \\
&  & 1-\alpha
\end{array}
\right]  ,\rho_{1}=\left[
\begin{array}
[c]{ccc}%
0 &  & \\
& \alpha & \\
&  & 1-\alpha
\end{array}
\right]  ,\\
1 &  \geq\,\alpha\geq0,\,
\end{align*}
and%

\[
\sigma_{0}=\left[
\begin{array}
[c]{cc}%
1 & 0\\
0 & 0
\end{array}
\right]  ,\sigma_{1}=\left[
\begin{array}
[c]{cc}%
1-\beta^{2} & \sqrt{1-\beta^{2}}\beta\\
\sqrt{1-\beta^{2}}\beta & \beta^{2}%
\end{array}
\right]  .
\]
Further, we suppose
\[
\alpha\geq\beta.
\]

It is very easy to see that there is no trace preserving positive map $\Gamma$
with (\ref{transform}). The proof runs as follows. Suppose there is such a
positive map. Then for any pure state $\psi_{0}$ and $\psi_{1}$ in the support
of $\rho_{0}$ and $\rho_{1}$ respectively, we should have
\[
\Gamma\left(  \left\vert \psi_{\theta}\right\rangle \left\langle \psi_{\theta
}\right\vert \right)  =\sigma_{\theta},\,\theta\in\Theta.
\]
However, $\left(  0\,0\,1\right)  ^{T}$ is a common element of the support of
$\rho_{0}$ and that of $\rho_{1}$.  Hence, with $\psi_{0}=\psi_{1}=\left(
0\,0\,1\right)  ^{T}$, the above equation is impossible.

In addition, as will be shown in the following by elementary analysis,
ifholds, (\ref{r-tr-e}) is true. Hence, by the above lemma, this means
$\left\{  \rho_{\theta}\right\}  _{\theta\in\Theta}\mathcal{\geq}^{c}\left\{
\sigma_{\theta}\right\}  _{\theta\in\Theta}$ holds. Therefore, this is an
example we need.  

The proof is as follows. If $t<0$, (\ref{r-tr-e}) holds obviously. Thus, with
\[
f\left(  t\right)  :=\left\Vert \rho_{0}-t\rho_{1}\right\Vert _{1}%
^{2}-\left\Vert \sigma_{0}-t\,\sigma_{1}\right\Vert _{1}^{2}\,,
\]
we prove $f\left(  t\right)  \geq0$ for all $t\geq0$.

If $t\geq0$,%

\begin{align*}
\left\Vert \rho_{0}-t\rho_{1}\right\Vert _{1} &  =\alpha\left(  1+t\right)
+\left(  1-\alpha\right)  \left\vert 1-t\right\vert ,\\
\left\Vert \sigma_{0}-t\sigma_{1}\right\Vert _{1} &  =\sqrt{\left(
1-t\right)  ^{2}+4t\beta^{2}}\leq\sqrt{\left(  1-t\right)  ^{2}+4t\alpha^{2}}%
\end{align*}
In the case of $0\leq t\leq1$,
\[
f\left(  t\right)  \geq4\alpha\left(  1-\alpha\right)  \left(  -t^{2}%
+t\right)  \geq0.
\]
In the case of $t\geq1$,
\[
f\left(  t\right)  \geq4\alpha\left(  1-\alpha\right)  \left(  t-1\right)
\geq0.
\]
After all, we have $f\left(  t\right)  \geq0$ for all $\,t\geq0$.

\end{document}